%
%
%

\documentclass{svproc}
%
%

\usepackage{url}
\usepackage{amsmath}
\usepackage{pgfplots}
\pgfplotsset{compat=1.15}
\usepackage{mathrsfs}
\usetikzlibrary{arrows}
\usepackage{graphicx}

\begin{document}
\mainmatter              
\title{Quantum fractional revival governed by adjacency matrix Hamiltonian in unitary Cayley graphs}
\titlerunning{Quantum Fractional Revival governed by adjacency matrix Hamiltonian}  
%
\author{Rachana Soni \and Neelam Choudhary
\and
Navneet Pratap Singh}
\authorrunning{Rachana Soni et al.} 
%
\tocauthor{Rachana Soni, Neelam Choudhary, Navneet Pratap Singh}
\institute{Bennett University, Greater Noida, U.P.,\\
\email{E20SOE813@bennett.edu.in},
}

\maketitle              

 \begin{abstract}
 \noindent
In this article, we give characterization for existence of quantum fractional revival in unitary Cayley graph utilizing adjacency matrix Hamiltonian. Unitary Cayley graph $X=( Z_n, S)$ is a special graph as connection set $S \subseteq Z_n$ is the collection of   coprimes to $n$. Unitary Cayley graph is an integral graph and its adjacency matrix is a circulant one. We prove that quantum fractional revival in unitary Cayley graphs exists only when the number of vertices is even. Number-theoretic and spectral characterizations are given for unitary Cayley graph admitting quantum fractional revival.  Quantum fractional revival is analogous to quantum entanglement. It is one of qubit state transfer phenomena useful in  communication of quantum information.  
 \\
\keywords{ Adjacency matrix, Evolution operator, Quantum state transfer, Quantum fractional revival, Unitary Cayley graph.}
 
 \end{abstract}

\section{Introduction}
\hspace{3mm}{The theory of quantum walks, the quantum version of classical walks in graph network, is interesting and exciting because it can be utilized to impact quantum computing algorithms. The continuous time quantum walk (CTQW) stands out for its simplicity and its relevance in creating faster quantum algorithms. In this paper, we present the basic ideas of CTQWs. CTQW is basically emphasizing that their evolution is controlled by the solution of Schrödinger equation instead of a sequence of probabilistic steps. CTQWs have been useful in uncovering insights into quantum dynamics, providing the investigation of quantum transport phenomena, and simulating complex quantum systems. In quantum walk, there are few quantum state transfer phenomenons who tell us about probability of teleportation of qubit states- quantum fractional revival(QFR),  perfect state transfer(PST),  prefect edge state transfer(PEST), pretty good fractional revival(PGFR), pretty good state transfer(PGST), are few mathematical framework which are researched extensively recently.}

{The transition matrix of quantum walks tells us about evolution of walker's position on path over time. This operator gives us knowledge about how quantum information flows in a graph which is representing network connection.}

{In this paper, we are studying quantum fractional revival on unitary Cayley graphs. The base theory of QFR with respect to adjacency hamiltonian is explored by \cite{GV}, \cite{BPA} and \cite{Achan1}. QFR governed by Laplacian matrix hamiltonian is studied thoroughly by \cite{LA}, they proved the characterization on cycle and path graph by taking Laplacian matrix as hamiltonian. 

{Let's go through the introductory knowledge of some basic features and functions of the quantum walk. }

\subsection{Evolution operator for the quantum walker}

\hspace{3mm} {The Hamiltonian \( H \) affects the evolution state of a quantum walker being transported in the network, for which the adjacency matrix is considered as hamiltonian often. The evolution operator over the time $t$, \( U(t) \), defined as:
\[
U(t) = e^{-iHt}
\]
here \( i \) is the complex imaginary unit,
    Hamiltonian of the quantum network system is \( H \), here setting to the \( A \), the adjacency matrix,
     \( t \) representing time.}

{The operator \( U(t) \) is a unitary, fixing the total probability of the quantum state remains 1 over time. The unitary nature of the evolution operator \( U(t) \) also implies that the evolution can be reversed, a crucial property of quantum processes.}

{From the concept  of quantum mechanics, \( U(t) \) acts on a qubit state vector \( \psi(0) \) representing the initial state of the system and describes the final quantum state. In the previous extensive research of events of quantum state transfer, there are significant state transfer phenomena that show the ability of the quantum system to transport quantum states between nodes with high fidelity and minimum loss of information.} 

\subsection{Periodicity}

Periodicity in continuous time quantum walks describes the phenomenon where a qubit state  at a vertex \( x \) returns to itself exactly after some time \( t \). That is, the graph admits periodicity at vertex \( x \) if there exists a time \( t \) such that:
\[
U(t) e_x = \gamma e_x
\]
for  phase factor \( \gamma \) with modulus unity.  \( U(t) = e^{-iAt} \) as defined earlier. Periodicity indicates the quantum walk is  symmetric at \( u \).

\subsection{Perfect State Transfer }

The transportation phenomenon PST, exists in a graph when a quantum state initially at one vertex of a graph is transferred entirely to another vertex after a certain time. Mathematically, a graph \( G \) with an evolution operator \( U(t) \) exhibits PST, if for  \( u \) and \( v \) we have:
\[
U(t) e_u = \gamma e_v
\]
for some time \( t \) and phase \( \gamma  \),  \( e_v \) and \( e_u \) are the standard basis state vectors at  \( u \) and \( v \) vertices respectively. Here, \( U(t) = e^{-iAt} \) and \( A \) is the Hamiltonian of the system. PST is special case of QFR which is widely studied in the paper \cite{Bose},\cite{G1},\cite{Cthesis},\cite{LPST},\cite{Zhou},\cite{BM0},\cite{BM},\cite{CM1},\cite{CM2},\cite{ACRJ}.

\subsection{Quantum fractional revival}

Quantum Fractional Revival happens when an initial quantum state at a vertex partially rebuilds itself with quantum state of other vertex  after some time $t$. For a graph \( X \) with evolution operator \( U(t) \), quantum fractional revival is observed between\( u \) and \( v \) vertices at $t$ if:
\[
U(t) e_u = \alpha e_u + \beta e_v 
\]
where we have condition \( |\alpha|^2 + |\beta|^2 = 1 \), with the complex numbers \( \alpha \) and \( \beta \). This implies that the qubit state at  \( u \) walks over \( v \) while keeping significant amplitude at the original vertex \( u \).

These quantum mechanics phenomena highlight the rich dynamics and potential applications of quantum walks in quantum communication and computation, particularly in tasks related to quantum state transfer and quantum entanglement generation.
. }

{Cayley graphs provide an effective algebraic structure and visual explanation of groups. These graphs are generated by visulizing group elements as vertices and considering a subset of the group's elements known as generators to create edges between the vertices. The generators resulting the group operation from a vertex to another that is, edges of a Cayley graph. Cayley graphs are a vital tool in group theory and its applications in many domains, including algebraic topology, applications of group theory and network analysis. This structure offers a deep insight into the symmetry and structure of the graph generated by groups. The degree of nodes in Cayley graph is happened to be $|S|$. }
\section{Preliminaries}
\subsection{Unitary Cayley graph}

{Thw Unitary Cayley graphs $X=(Z_n,S)$ are the kind of Cayley graphs where the edge generators set $S$ is collection of all those elements of $Z_n$ which are coprime to the number $n$, that is, $S=\{u:gcd(u,n)=1$ or in other words collection of units in the group $Z_n.$ The additive cyclic group $Z_n$ will be working as set of vertices and connection set $S$ decides the connection between vertices and will form set of edges.} \\

\noindent Some properties:

\begin{enumerate}
 
    \item Unitary Cayley graph is always a connected graph.

    \item if number of vertices of graphs is a prime then it becomes a complete graph.

    \item if number of vertices is equal to some power of prime  then it becomes complete $p-$partite graph.

    \item It is a -regular graph with $|\phi(n)|$ degree, $\phi(n)$ is well-known Euler-phi function.

    \item Adjacency matrix of it is a circulant, Hermitian and symmetric. Eigenvalues of it are always integers. These properties make it a remarkable graph. One can see results on integral and circulant graphs in \cite{A}, \cite{eb}, \cite{BM}, \cite{NS}, \cite{ACRJ}, \cite{UCG}.
  
\end{enumerate}

\begin{figure}
\vspace{-2cm}
\definecolor{yqqqqq}{rgb}{0.5019607843137255,0.,0.}
\begin{tikzpicture}[line cap=round,line join=round,x=0.8149127445795874cm,y=1.0cm]
\clip(-2.58,-2.44) rectangle (15,5);
\draw [line width=1.pt] (-1.82,-0.02)-- (0.,0.);
\draw [line width=1.pt] (1.34,1.)-- (0.36,-1.06);
\draw [line width=1.pt] (0.36,-1.06)-- (2.3,-1.);
\draw [line width=1.pt] (1.34,1.)-- (2.3,-1.);
\draw [line width=1.pt] (3.,1.)-- (3.,-1.);
\draw [line width=1.pt] (3.,-1.)-- (5.,-1.);
\draw [line width=1.pt] (5.,1.)-- (5.,-1.);
\draw [line width=1.pt] (3.,1.)-- (5.,1.);
\draw [line width=1.pt] (8.,2.)-- (6.16,0.96);
\draw [line width=1.pt] (6.16,0.96)-- (7.,-1.);
\draw [line width=1.pt] (7.,-1.)-- (9.,-1.);
\draw [line width=1.pt] (9.,-1.)-- (10.,1.);
\draw [line width=1.pt] (8.,2.)-- (10.,1.);
\draw [line width=1.pt] (8.,2.)-- (7.,-1.);
\draw [line width=1.pt] (8.,2.)-- (9.,-1.);
\draw [line width=1.pt] (6.16,0.96)-- (10.,1.);
\draw [line width=1.pt] (6.16,0.96)-- (9.,-1.);
\draw [line width=1.pt] (7.,-1.)-- (10.,1.);
\draw [line width=1.pt] (12.,2.)-- (11.,1.);
\draw [line width=1.pt] (11.,1.)-- (11.,-1.);
\draw [line width=1.pt] (11.,-1.)-- (12.,-2.);
\draw [line width=1.pt] (12.,-2.)-- (13.,-1.);
\draw [line width=1.pt] (13.,-1.)-- (13.,1.);
\draw [line width=1.pt] (12.,2.)-- (13.,1.);
\begin{scriptsize}
\shade [ball color= red!70!black] (-1.82,-0.02) circle (2.5pt);
\shade [ball color= red!70!black] (0.,0.) circle (2.5pt);
\shade [ball color= red!70!black] (1.34,1.) circle (2.5pt);
\shade [ball color= red!70!black] (0.36,-1.06) circle (2.5pt);
\shade [ball color= red!70!black] (2.3,-1.) circle (2.5pt);
\shade [ball color= red!70!black] (3.,1.) circle (2.5pt);
\shade [ball color= red!70!black] (5.,1.) circle (2.5pt);
\shade [ball color= red!70!black] (3.,-1.) circle (2.5pt);
\shade [ball color= red!70!black] (5.,-1.) circle (2.5pt);
\shade [ball color= red!70!black] (8.,2.) circle (2.5pt);
\shade [ball color= red!70!black] (6.16,0.96) circle (2.5pt);
\shade [ball color= red!70!black] (7.,-1.) circle (2.5pt);
\shade [ball color= red!70!black] (9.,-1.) circle (2.5pt);
\shade [ball color= red!70!black] (10.,1.) circle (2.5pt);
\shade [ball color= red!70!black] (12.,2.) circle (2.5pt);
\shade [ball color= red!70!black] (11.,1.) circle (2.5pt);
\shade [ball color= red!70!black] (11.,-1.) circle (2.5pt);
\shade [ball color= red!70!black] (12.,-2.) circle (2.5pt);
\shade [ball color= red!70!black] (13.,-1.) circle (2.5pt);
\shade [ball color= red!70!black] (13.,1.) circle (2.5pt);
\end{scriptsize}
\end{tikzpicture}

\caption{Unitary Cayley graph examples for $2,3,4,5,6$ vertices respectively.}
\end{figure}
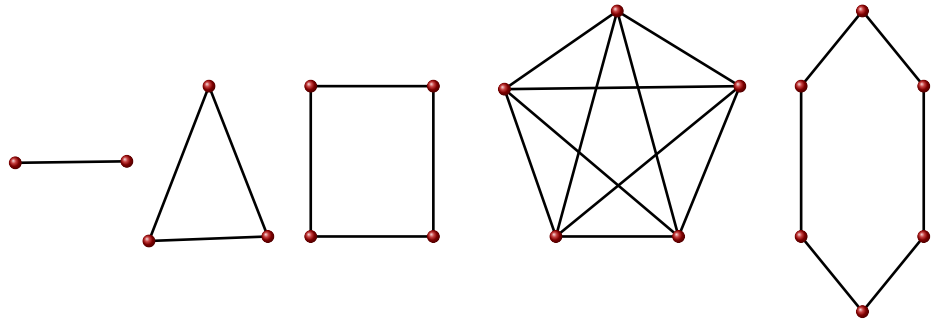

Adjacency matrix of unitary cayley graphs is defines as: $$[a]_{uv} = 1~\text{if}~gcd(u-v,n) =1, \text{otherwise}~ [a]_{uv}=0.$$.

\subsection{Eigenvalues of unitary Cayley graph}
The information about eigenvalues and eigenvectors of unitary Cayley graph is given in \cite{Liu}, 
$$ \lambda_{d} = \sum_{j\in S}\omega^{d j} = r(d,n) $$ 
Here $r(d,n)$ is Ramanujan sum function \cite{UCG}, $$r(d,n) =\mu (t_{d}) \dfrac{\phi(n)}{\phi(t_{d})}$$ 
where $t_{d}= \dfrac{n}{gcd(d,n)}.$ \\

The eigenvectors are, 
$$v_{d} = [1 ~ \omega_n^{d}~ \ldots ~ \omega_n^{(n-1)d}]$$

\subsection{Evolution operator of unitary Cayley graph} 
$$ U(t) = \dfrac{1}{n} \sum_{d=0}^{(n-1)} e^{i \lambda_{d}t}E_{d}$$

where $\lambda_{d}$ are distict eigenvalues of adjacency matrix and $E_{d}$ is projection matrix that is $E_{d} = v \times v^*$

\subsection{Properties of projection matrices $E_{d}$}

\begin{enumerate}

    \item Projection matrices are idempotent matrix that is $E_{d}^2 = E_{d}$.
    
    \item Projection matrix equals to its conjugate transpose hence it is a Hermitian matrix.
    \item It has eigenvalues $0$ and $1$. and $0$ eigenvalue has multiplicity $(n-1).$

    \item Trace of the projection matrix will always be 1.
    \item It is orthogonal to other projection matrices of the graph, that is, $$E_{d} E_{e}=0, d \neq e$$
    \item Sum of projection matrices is identity matrix $$\sum_{d}^{(n-1)} E_{d} = I $$
\item It is 1-rank matrix because it projects on to the one dimensional subspace spanned by eigenvector $v.$
\item Two vertices $u,v$ are called strongly cospectral if $$E_d e_u= E_d e_v `~ \text{for every}~ d .$$

\end{enumerate}

 Quantum fractional revival is said to be  \textit{proper} if $\beta \neq 0$, It is called \textit{balanced fractional revival} if $|\alpha|$ equals to $|\beta|$.

{If QFR occurs in the graph then, $$e^*_u U(t) e_u = e^*_u\alpha e_u + e^*_u\beta e_v$$ since $e_u$ and $e_v$ are orthonormal vectors,
$$\alpha = U(t)_{u,u}= \dfrac{1}{n}  \sum_{d=0}^{n-1}e^{i\lambda_d t}$$
and $$ \beta = U(t)_{v,u}=\dfrac{1}{n} \sum_{d =0}^{n-1}e^{i \lambda_{d}t}\omega_n^{(v-u)d}$$}

\begin{remark}  \label{remark:1.2} \cite{AChan} If first two rows and columns of evolution operator of quantum walk $U(t)$ and adjacency matrix $A$ are labelled  $u$ and $v$. Now if QFR occurs  between $u$ and $v$ at  $t$ time, then,

\[
U(t) = \begin{bmatrix}
P & 0 \\
0 & Q
\end{bmatrix},
\]

where

\[
P = \begin{bmatrix}
\alpha & \beta \\
\beta & \gamma
\end{bmatrix},
\]

and $\gamma = -\frac{\overline{\alpha}}{\overline{\beta}} \beta$ if $\beta \neq 0$. 

\end{remark} 

\begin{definition} \cite{BM}  M\"obius function \( \mu(m) \), for $m \geq 0 $, is the summation of the {primitive \(m\)th roots of unity}. Its value lies in set \(\{-1, 0, 1\}\) which depends upon the {prime factorization} of \( m \):

 \begin{align*}
     \mu(m) &= 
\begin{cases} 
1, & m~ \text{ has prime factors in even numbers and is squarefree.}\\ 
-1, & m~ \text{has  prime factors in odd numbers and is squarefree.} \\
0, & m ~ \text{has a squared prime factor.}
\end{cases} \\
 \end{align*}

\end{definition}

\section{Main Results}\label{sec-3}

\begin{lemma}
    If $u$ and $v$ are the vertices of unitary Cayley graph $X=(Z_n,S)$  then $(E_d)_{uu} = (E_d)_{vv} =1/n$, moreover the $uu^{th}$ and $vv^{th}$ entries of evolution operator of continuous time quantum walk  $(U_d)_{uu} = (U_d)_{vv}$ for every $u,v \in X$.
\end{lemma}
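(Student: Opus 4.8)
The plan is to compute the diagonal entries of the projection matrices $E_d$ directly from the explicit eigenvector formula, and then to read off the equality of the diagonal entries of the evolution operator as an immediate consequence. The crucial structural fact I would exploit is that every eigenvector $v_d = [1\ \omega_n^d\ \ldots\ \omega_n^{(n-1)d}]$ of the unitary Cayley graph has entries that are all $n$th roots of unity, hence of modulus one; this is precisely what forces the diagonal of each $E_d$ to be constant, and it is a feature special to Cayley graphs on abelian groups (circulant adjacency matrices).

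First I would write $E_d$ as the orthogonal projection onto the one-dimensional eigenspace spanned by $v_d$, namely $E_d = \frac{1}{n} v_d v_d^{*}$, the factor $1/n$ being the normalization $\|v_d\|^2 = n$ (consistent with $\operatorname{tr}(E_d)=1$ and $\sum_d E_d = I$ from the listed properties). Its $(j,k)$ entry is then $(E_d)_{jk} = \frac{1}{n}\,\omega_n^{jd}\,\overline{\omega_n^{kd}} = \frac{1}{n}\,\omega_n^{(j-k)d}$. Setting $j=k=u$ gives $(E_d)_{uu} = \frac{1}{n}|\omega_n^{ud}|^2 = \frac{1}{n}$, and the identical computation with $u$ replaced by $v$ yields $(E_d)_{vv} = \frac{1}{n}$. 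This settles the first assertion for every vertex and every spectral index $d$.

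For the second assertion I would invoke the spectral form $U(t) = \sum_{d} e^{i\lambda_d t} E_d$, so that the $d$th spectral component is $U_d = e^{i\lambda_d t} E_d$. Since $(E_d)_{uu} = (E_d)_{vv} = \frac{1}{n}$ independently of the chosen vertex, it follows at once that $(U_d)_{uu} = e^{i\lambda_d t}(E_d)_{uu} = e^{i\lambda_d t}(E_d)_{vv} = (U_d)_{vv}$, and summing over $d$ gives the analogous equality $U(t)_{uu} = U(t)_{vv}$ for the full operator. In particular this recovers the earlier formula $\alpha = U(t)_{uu} = \frac{1}{n}\sum_{d} e^{i\lambda_d t}$, which visibly does not depend on $u$.

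There is no serious obstacle in the argument; the only point demanding care is fixing the normalization convention for $E_d$, since the text writes the evolution operator with an explicit prefactor $\frac{1}{n}$ alongside $E_d = v\times v^{*}$. I would therefore declare at the outset that $E_d$ denotes the \emph{normalized} orthogonal projection (equivalently, that the $1/n$ is absorbed into $E_d$), so that the listed properties $\operatorname{tr}(E_d)=1$ and $\sum_d E_d = I$ hold and the claimed diagonal value $1/n$ is the correct one.
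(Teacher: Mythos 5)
Your proof is correct and follows essentially the same route as the paper's own proof: both compute $(E_d)_{jk} = \frac{1}{n}\omega_n^{(j-k)d}$ from the normalized eigenvectors of the circulant adjacency matrix, read off the constant diagonal $1/n$, and then substitute into the spectral decomposition $U(t)=\sum_d e^{i\lambda_d t}E_d$ to get equality of the diagonal entries. Your explicit remark about absorbing the $1/n$ normalization into $E_d$ is a welcome clarification of a convention the paper leaves slightly inconsistent, but it does not change the substance of the argument.
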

\begin{proof}
    The eigenvector of unitary Cayley graph $v_r = [1 \omega^r \omega^{2r} \ldots \omega^{(n-1)r}]^T $. The spectral idempotent matrix $$(E_{d})_{uv} =  \nu_d \times \nu_d^* = \dfrac{1}{\sqrt{n}} \times \dfrac{1}{\sqrt{n}} [\omega_n^{(v-u)d}]= \dfrac{1}{n} [\omega_n^{(v-u)d}]$$ It is clear that for the vertices $u$ and $v$, $$ (E_d)_{uu} = (E_d)_{vv} = 1/n$$ 
    The evolution operator for unitary Cayley graph is $$ U(t) =  \sum_{d=0}^{(n-1)} e^{i \lambda_{d}t}E_{d}$$, for the vertices $u$ and $v$, the entry $$U(t)_{uu} = U(t)_{vv}=  \sum_{d=0}^{(n-1)} e^{i \lambda_{d}t}$$
\end{proof}
\begin{lemma}
    If unitary Cayley graph $X=(Z_n,S)$ admits QFR between the nodes $u$ and $v$ then $u$ and $v$ are strongly cospectral.
\end{lemma}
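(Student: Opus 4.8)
The plan is to run the standard spectral argument for fractional revival, using the diagonalization of \(U(t)\) together with the fact (Lemma 1) that the diagonal entries of each spectral idempotent agree at \(u\) and \(v\). Write \(A = \sum_{\theta} \theta\, E_\theta\) for the distinct eigenvalues \(\theta\) of the adjacency matrix, where each \(E_\theta\) is the orthogonal projection onto the \(\theta\)-eigenspace, so that \(U(t) = \sum_\theta e^{i\theta t} E_\theta\), \(\sum_\theta E_\theta = I\), and \(E_\theta E_\sigma = \delta_{\theta\sigma} E_\theta\). Note that because \(A\) is real symmetric, every \(E_\theta\) is a \emph{real} matrix, a fact I will need at the very end.

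First I would substitute the QFR hypothesis \(U(t)e_u = \alpha e_u + \beta e_v\) into the spectral decomposition. Expanding \(e_u = \sum_\theta E_\theta e_u\) and \(e_v = \sum_\theta E_\theta e_v\) and applying the idempotent \(E_\sigma\) to both sides, orthogonality collapses the sums and yields, for each eigenvalue \(\sigma\), the single-eigenspace identity \(e^{i\sigma t} E_\sigma e_u = \alpha E_\sigma e_u + \beta E_\sigma e_v\). Assuming the revival is proper (\(\beta \neq 0\), which is what ``between \(u\) and \(v\)'' means), this rearranges to \(E_\sigma e_v = c_\sigma E_\sigma e_u\) with \(c_\sigma = (e^{i\sigma t} - \alpha)/\beta\). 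Thus \(E_\sigma e_v\) is already forced to be a scalar multiple of \(E_\sigma e_u\) for every \(\sigma\); it remains only to identify the scalar.

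To control \(|c_\sigma|\) I would invoke Lemma 1. Since each \(E_\sigma\) is idempotent and Hermitian, \(\|E_\sigma e_u\|^2 = e_u^* E_\sigma e_u = (E_\sigma)_{uu}\) and likewise \(\|E_\sigma e_v\|^2 = (E_\sigma)_{vv}\). Summing the rank-one computation of Lemma 1 over the indices contributing to the eigenvalue \(\sigma\) gives \((E_\sigma)_{uu} = (E_\sigma)_{vv}\), so \(\|E_\sigma e_u\| = \|E_\sigma e_v\|\). Taking norms in \(E_\sigma e_v = c_\sigma E_\sigma e_u\) then forces \(|c_\sigma| = 1\) whenever \(E_\sigma e_u \neq 0\); and when \(E_\sigma e_u = 0\) the equal norms make \(E_\sigma e_v = 0\) as well, so the relation below holds trivially.

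The final, and really the only delicate, step is to upgrade ``\(|c_\sigma| = 1\)'' to ``\(c_\sigma = \pm 1\).'' Here the realness of \(E_\sigma\) is decisive: \(E_\sigma e_u\) and \(E_\sigma e_v\) are real vectors, so if \(E_\sigma e_u \neq 0\) the scalar \(c_\sigma\) relating them must itself be real, and a real number of modulus one is \(\pm 1\). Hence \(E_\sigma e_v = \pm E_\sigma e_u\) for every distinct eigenvalue \(\sigma\), which is exactly the strong cospectrality of \(u\) and \(v\). The point to be careful about is precisely this realness: one must use the projections onto the \emph{distinct} eigenspaces (which are real because \(A\) is real symmetric) rather than the rank-one complex idempotents \(v_d v_d^*\) attached to the individual complex eigenvectors, for which \(c_\sigma\) could be a genuinely complex unit and the \(\pm\) conclusion would fail.
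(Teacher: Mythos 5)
Your proof is correct and follows essentially the same route as the paper's: expand $U(t)$ spectrally, hit the QFR equation $U(t)e_u=\alpha e_u+\beta e_v$ with each projection to collapse it to $(e^{i\sigma t}-\alpha)E_\sigma e_u=\beta E_\sigma e_v$, and then use Lemma 1 to pin down the scalar relating $E_\sigma e_u$ and $E_\sigma e_v$. Where you differ is in the final step, and your version is tighter than the paper's own. The paper jumps from ``$E_d e_u$ is a scalar multiple of $E_d e_v$'' together with $(E_d)_{uu}=(E_d)_{vv}$ straight to $E_d e_u=\pm E_d e_v$; but equality of those diagonal entries only yields $|c_\sigma|=1$ (your norm argument makes this precise), and the paper never explains why the unimodular scalar must be real. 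You supply exactly the missing ingredient: the projections onto eigenspaces of \emph{distinct} eigenvalues are real matrices because $A$ is real symmetric, so a real scalar of modulus one is forced to be $\pm 1$. Your closing caveat is also substantive rather than pedantic: the paper's $E_d$ are the rank-one complex idempotents $\tfrac{1}{n}v_d v_d^*$, and for those one computes $E_d e_u=\omega_n^{(v-u)d}E_d e_v$, a genuinely complex unit in general, so the $\pm$ conclusion (and hence strong cospectrality as usually defined) only makes sense after grouping the indices $d$ by distinct integer eigenvalue --- which is precisely what your argument does and the paper's does not.
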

\begin{proof}
    Let $u$ and $v$ be the vertices in the unitary Cayley graph $X$ where QFR occurs at the time $t$, Now if $U(t)$ be the evolution operator of CTQW then, 

     $$U(t) e_u = \alpha e_u + \beta e_v$$\\
    $$ \sum_{d=0}^{(n-1)} e^{it \lambda_{d} }E_{d}e_u = \alpha e_u + \beta e_v$$
 multiplying $E_{d}$ both side  knowing the fact that $E_{\ell}$ is orthogonal projection,  
  $$  e^{i t \lambda_{d} }E_{d}e_u = \alpha E_{d} e_u + \beta E_{d} e_v$$
  $$ ( e^{i t \lambda_{d} }- \alpha )E_{d} e_u = \beta E_{d} e_v$$
  implies $E_{d} e_u$ is scalar multiple of $E_{d} e_v$ but as we proved in previous lemma, for unitary Cayley graph, $(E_r)_{uu} = (E_r)_{vv}$, we can conclude that $$E_{d} e_u = \pm E_{d} e_v$$ That is, $u$ and $v$ are strongly cospectral vertices. 

\end{proof}
\begin{lemma} \label{thm-1}
    
 Let $X=(Z_n,S)$ be a unitary Cayley graph and $\alpha, \beta$ are complex numbers then graph $X$ admits quantum fractional revival between $u$ and $v$ at  $t >0$ time if and only if the unitary evolution operator is of the format \begin{equation}U(t) = \begin{bmatrix}
    \begin{bmatrix}
        \alpha & \beta \\ \beta & \overline{\alpha}
    \end{bmatrix}& \textbf{0} \\ \textbf{0} & Q
\end{bmatrix} \label{eq:Ut}\end{equation}   \end{lemma}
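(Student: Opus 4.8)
The plan is to establish the two implications separately, with the reverse direction being immediate and the forward direction carrying the real content. Two facts will be used throughout. First, since the adjacency matrix $A$ is real and symmetric, $U(t)=e^{-iAt}$ satisfies $U(t)^\top=e^{-iA^\top t}=U(t)$, so $U(t)$ is a unitary and complex-symmetric matrix; moreover, being a power series in the circulant matrix $A$, it is itself circulant, so its diagonal is constant (this is the content of the first lemma above, $U(t)_{uu}=U(t)_{vv}$). Second, I may invoke Remark \ref{remark:1.2} for the block-diagonalisation that accompanies revival.

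For the reverse implication, suppose $U(t)$ has the form in \eqref{eq:Ut}. Then its $u$-th (first) column is $(\alpha,\beta,0,\dots,0)^\top=\alpha e_u+\beta e_v$, so $U(t)e_u=\alpha e_u+\beta e_v$. Since $U(t)$ is unitary every column is a unit vector, and the norm of this first column is $|\alpha|^2+|\beta|^2$; hence $|\alpha|^2+|\beta|^2=1$, which is exactly the revival normalisation. Therefore $X$ admits QFR between $u$ and $v$ at time $t$, and nothing further is needed here.

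For the forward implication I would start from $U(t)e_u=\alpha e_u+\beta e_v$, which fixes the $u$-th column of $U(t)$ to be supported on $\{u,v\}$ with $U(t)_{uu}=\alpha$ and $U(t)_{vu}=\beta$. Using $U(t)=U(t)^\top$, the $u$-th row is likewise supported on $\{u,v\}$. For a proper revival $\beta\neq0$, and orthogonality of the columns of the unitary $U(t)$ then forces $U(t)_{vw}=0$ for every $w\in Z_n\setminus\{u,v\}$: pairing the $w$-th column with the $u$-th gives $\overline\alpha U(t)_{uw}+\overline\beta U(t)_{vw}=0$, and $U(t)_{uw}=0$ already, so $U(t)_{vw}=0$. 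Thus both the $u$-th and $v$-th rows and columns live inside $\{u,v\}$, which is precisely the block-diagonal conclusion of Remark \ref{remark:1.2}: $U(t)=\begin{bmatrix}P&0\\0&Q\end{bmatrix}$ with $P=\begin{bmatrix}\alpha&\beta\\\beta&\gamma\end{bmatrix}$. It remains only to show $\gamma=\overline\alpha$.

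The last identification is where I expect the main obstacle, and it is where the arithmetic of the unitary Cayley graph must be used. By constancy of the diagonal, $\gamma=U(t)_{vv}=U(t)_{uu}=\alpha$, so it suffices to prove that $\alpha$ is a real number; then $\gamma=\alpha=\overline\alpha$. The circulant support analysis shows that a proper revival is possible only when $v-u=n/2$, forcing $n$ to be even, in which case $X$ is bipartite and its spectrum $\{\lambda_d\}=\{r(d,n)\}$ is invariant under $\lambda\mapsto-\lambda$. Writing $\alpha=U(t)_{uu}=\tfrac1n\sum_{d=0}^{n-1}e^{i\lambda_d t}$ via $(E_d)_{uu}=1/n$ and pairing each term $e^{i\lambda_d t}$ with its partner $e^{-i\lambda_d t}$ collapses the sum to $\tfrac1n\sum_d\cos(\lambda_d t)$, which is real. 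Hence $\gamma=\overline\alpha$ and the block form is exactly \eqref{eq:Ut}. The delicate step to justify with care is the negation-symmetry of the Ramanujan-sum spectrum (equivalently, reconciling $\gamma=\alpha$ from diagonal constancy with $\gamma=-\tfrac{\overline\alpha}{\overline\beta}\beta$ from Remark \ref{remark:1.2}, which together already force $\alpha\overline\beta$ to be purely imaginary); making this equivalence rigorous is the crux of the argument.
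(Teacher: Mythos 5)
Your proposal is correct, and on the forward direction it is genuinely more complete than the paper's own argument. The paper's proof is short: it invokes Remark \ref{remark:1.2} to get the block-diagonal shape, notes $U(t)_{u,u}=\alpha$ and $U(t)_{v,u}=\beta$, and then simply asserts that for a unitary Cayley graph the block is $\begin{bmatrix}\alpha&\beta\\ \beta&\overline{\alpha}\end{bmatrix}$; it never justifies why the $(v,v)$ entry equals $\overline{\alpha}$ rather than just some $\gamma$ with $|\gamma|=|\alpha|$. You correctly identify this as the crux: diagonal constancy of the circulant $U(t)$ gives $\gamma=\alpha$, Remark \ref{remark:1.2} gives $\gamma=-\tfrac{\overline{\alpha}}{\overline{\beta}}\beta$, and together these only force $\alpha\overline{\beta}$ to be purely imaginary --- not $\alpha=\overline{\alpha}$. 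Your resolution (QFR forces $v-u=n/2$, hence $n$ even, hence the spectrum $\{\lambda_d\}$ is negation-symmetric because every element of $S$ is odd, hence $\alpha=\tfrac1n\sum_d e^{i\lambda_d t}=\tfrac1n\sum_d\cos(\lambda_d t)$ is real) is exactly the missing ingredient, and your derivation of block-diagonality from complex-symmetry of $U(t)$ plus column orthogonality is a clean substitute for the citation to Remark \ref{remark:1.2}. The reverse direction is the same in both: unitarity of the displayed matrix yields $|\alpha|^2+|\beta|^2=1$.

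One step you should spell out rather than assert: ``the circulant support analysis shows that a proper revival is possible only when $v-u=n/2$.'' This does follow from your setup, but it needs a sentence: since $U(t)$ is circulant, $U(t)_{v,u}=\beta\neq 0$ forces $U(t)_{u,\,2u-v}=\beta\neq 0$; as row $u$ is supported on $\{u,v\}$ and $2u-v\neq u$, we must have $2u-v\equiv v \pmod n$, i.e. $2(v-u)\equiv 0\pmod n$, so $v-u=n/2$ and $n$ is even. (This is the fact the paper itself only obtains later, in Theorem \ref{thm-2}, by citing Proposition 6.4 of \cite{Achan1}; your circulant argument makes the lemma self-contained and independent of that citation.) With that sentence added, your proof is complete and strictly stronger than the one in the paper.
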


\begin{proof}
First we prove necessary condition, suppose there is QFR between vertices $u$ and $v$, by remark \ref{remark:1.2}  evolution operator will be a block diagonal matrix. \\

Now if  the first two rows and columns of $U(t)$ are labelled as $u$ and $v$ and $U(t)$. As we know diagonal entries of $U(t)$ for the existence of QFR, $|U(t)_{(u,u)}| = \alpha$, $|U(t)_{(v,u)}| = \beta $, hence, for unitary Cayley graph, the matrix $$P= \begin{bmatrix}
     \alpha & \beta \\ \beta & \overline{\alpha}
 \end{bmatrix} $$. \\

 Conversely, let unitary evolution operator is of the format \ref{eq:Ut}), $(u,u)^{th}$ and $(v,u)^{th}$ entries of $U((t)$ are, $\alpha$ and $\beta$ respectively. Now we prove $|\alpha|^2 + |\beta|^2 =1 $
 for existence of quantum fractional revival, since evolution operator is unitary, hence, \begin{align*}
 U(t)U^*(t)&=I \\
 \alpha \overline{\alpha} + \beta \overline{\beta} &=1 ~ \text{and} ~ \alpha \overline{\beta} + \beta \overline{\alpha} = 0 \\
 |\alpha|^2 + |\beta|^2 &= 1. ~ 
 \end{align*}
\end{proof}

\begin{theorem} \label{thm-2}
    If $n$ is odd, there is no quantum fractional revival in unitary Cayley graph.
\end{theorem}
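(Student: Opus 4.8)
The plan is to exploit the fully explicit circulant spectral data of $X$ together with a single symmetry of its eigenvalues, turning the definition of QFR into an over-determined system of equations that a proper revival cannot satisfy when $n$ is odd. Throughout I write $\omega=\omega_n=e^{2\pi i/n}$ and set $a=v-u$, assuming a \emph{proper} revival so that $\beta\neq 0$ (if $\beta=0$ then $U(t)e_u=\alpha e_u$ is mere periodicity and transfers no amplitude to $v$, so nothing is being claimed).

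First I would read the whole $u$-th column of the evolution operator straight from the definition: $U(t)e_u=\alpha e_u+\beta e_v$ says exactly that $U(t)_{wu}=\alpha\,\delta_{w,u}+\beta\,\delta_{w,v}$ for every vertex $w$. On the other hand, using the eigenvector $v_d=[1,\omega^{d},\dots,\omega^{(n-1)d}]^{T}$ and the spectral form of $U(t)$, the same entry equals $U(t)_{wu}=\tfrac1n\sum_{d=0}^{n-1}e^{i\lambda_d t}\,\omega^{(w-u)d}$. This exhibits the vector $(U(t)_{wu})_w$ as the inverse discrete Fourier transform of $(e^{i\lambda_d t})_d$, so inverting the transform (i.e.\ summing against $\omega^{-(w-u)d}$ and using orthogonality of characters) gives, for every $d$,
$$e^{i\lambda_d t}=\sum_{w}U(t)_{wu}\,\omega^{-(w-u)d}=\alpha+\beta\,\omega^{-ad}.$$
This family of identities is the technical heart of the argument and is the step I expect to need the most care, since it is where the circulant structure and character orthogonality enter; everything afterwards is elementary.

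Next I would invoke the spectral symmetry $\lambda_d=\lambda_{n-d}$. This is immediate from the Ramanujan-sum formula $\lambda_d=\mu(t_d)\,\phi(n)/\phi(t_d)$ with $t_d=n/\gcd(d,n)$, because $\gcd(n-d,n)=\gcd(d,n)$ forces $t_{n-d}=t_d$. Applying the displayed identity to both $d$ and $n-d$ and equating the (equal) left-hand sides yields $\alpha+\beta\,\omega^{-ad}=\alpha+\beta\,\omega^{ad}$, and since $\beta\neq 0$ this gives $\omega^{2ad}=1$, that is $n\mid 2ad$, for every $d$.

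Finally I would specialise to $d=1$, obtaining $n\mid 2a$. When $n$ is odd, $\gcd(n,2)=1$, so $n\mid a$, i.e.\ $v\equiv u\pmod n$, contradicting $u\neq v$; hence no proper quantum fractional revival can occur, which is the claim. As a consistency check, for even $n$ the same computation leaves exactly the antipodal possibility $a=n/2$, matching the expectation that state transfer in these graphs happens between antipodal vertices. An essentially equivalent route, which ties the argument to the earlier results, is to feed the proportionality $E_d e_u=\pm E_d e_v$ from the strong-cospectrality lemma into the relation $E_d e_v=\omega^{-ad}E_d e_u$ (read off from $(E_d)_{wk}=\tfrac1n\omega^{(w-k)d}$), forcing $\omega^{-ad}=\pm1$ for every $d$ and collapsing again to $n\mid 2a$; I would nonetheless keep the Fourier-inversion version as the primary proof, since it relies only on the definition of QFR and the explicit spectrum.
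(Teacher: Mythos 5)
Your proof is correct, but it takes a genuinely different route from the paper's. The paper disposes of the theorem in two lines: it cites Proposition 6.4 of \cite{Achan1} to assert that fractional revival in a unitary Cayley graph can only occur between antipodal vertices, and then observes that an antipodal partner $u+n/2$ does not exist when $n$ is odd; the key antipodality claim is entirely outsourced to that reference. You instead derive everything from the circulant structure itself: Fourier inversion of the column identity $U(t)_{wu}=\alpha\delta_{w,u}+\beta\delta_{w,v}$ yields $e^{i\lambda_d t}=\alpha+\beta\,\omega^{-ad}$ for every $d$ (a correct computation, using character orthogonality), and the symmetry $\lambda_d=\lambda_{n-d}$ of the Ramanujan-sum eigenvalues then forces $n\mid 2ad$ for all $d$, hence $n\mid 2a$, which is impossible for odd $n$ when $u\neq v$ and $\beta\neq 0$. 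Your restriction to proper revival ($\beta\neq 0$) is not a weakening but a necessary reading of the statement: unitary Cayley graphs are integral, hence periodic, so if $\beta=0$ were allowed the theorem would be false as written. What your argument buys beyond the paper's: it is fully self-contained, it fixes the correct normalization of the idempotents $E_d$ (which the paper itself states inconsistently), and for even $n$ it recovers the antipodal constraint $a=n/2$ as a byproduct---which is exactly the content of the external proposition the paper leans on, and is consistent with the paper's Example for $n=2,4,6$. The cost is length: the paper's citation-based proof is shorter, but it leaves the reader to check that the cited proposition really applies to this family, and its phrase about ``even numbers of anti-podal vertices'' is not a precise argument, whereas each step of yours can be verified directly.
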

\begin{proof}
    Quantum fractional revival exists between anti-podal vertices in unitary Cayley graphs and if $n$ is odd, there will be even numbers of anti-podal vertices for every vertex, by the consequence of proposition 6.4 of \cite{Achan1}, $n$ can not be odd. 
\end{proof}

\begin{corollary} \label{thm-3}
    A unitary cayley graph admits quantum fractional revival if and only if n is an even number.
\end{corollary}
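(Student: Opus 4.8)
The statement is an equivalence, and one direction is already in hand: by Theorem~\ref{thm-2} an odd number of vertices rules out quantum fractional revival, so contrapositively the existence of QFR forces $n$ to be even. (Alternatively, the strong-cospectrality lemma shows that QFR requires $u,v$ to be strongly cospectral, and for a unitary Cayley graph $E_d e_u=\pm E_d e_v$ for all $d$ forces $\omega_n^{(v-u)d}=\pm1$ for every $d$, hence $v-u=n/2$, which exists only when $n$ is even.) The work therefore lies in the converse: I would show that \emph{every} even $n$ actually produces QFR between the antipodal pair $u=0$, $v=n/2$.

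For the converse I would fix $u=0$, $v=n/2$ and, using the spectral form of the walk, write the transfer amplitude to an arbitrary vertex $w$ as
\[
U(t)_{w,0}=\frac1n\sum_{d=0}^{n-1}e^{i\lambda_d t}\,\omega_n^{wd},
\]
which is the inverse discrete Fourier transform of the sequence $c_d=e^{i\lambda_d t}$. Since $\omega_n^{(n/2)d}=(-1)^d$, and $\sum_{d\text{ even}}\omega_n^{wd}$ equals $n/2$ when $w\in\{0,n/2\}$ and $0$ otherwise (and likewise for the odd part), the off-diagonal entries $U(t)_{w,0}$ vanish for all $w\notin\{0,n/2\}$ precisely when $c_d$ depends only on the parity of $d$. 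Thus QFR between $0$ and $n/2$ is equivalent to finding $t>0$ for which $e^{i\lambda_d t}$ is constant on the even indices and (possibly differently) constant on the odd indices.

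To produce such a time I would use that the graph is integral, so every $\lambda_d\in\mathbb{Z}$. Let $g=\gcd\{\lambda_d-\lambda_{d'}:d\equiv d'\!\pmod 2\}$ be the gcd of the same-parity eigenvalue differences and take $t=2\pi/g$. Then for same-parity $d,d'$ we get $e^{i(\lambda_d-\lambda_{d'})t}=e^{2\pi i(\lambda_d-\lambda_{d'})/g}=1$, so $c_d=a$ on even indices and $c_d=b$ on odd indices. Substituting back gives $\alpha=U(t)_{0,0}=(a+b)/2$ and $\beta=U(t)_{n/2,0}=(a-b)/2$ with $|a|=|b|=1$, whence $|\alpha|^2+|\beta|^2=1$ automatically, and $U(t)$ takes exactly the block form of Lemma~\ref{thm-1}.

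The remaining point, which I expect to be the real difficulty, is properness: guaranteeing $\beta\neq0$, i.e. $a\neq b$, so that the revival is genuine rather than mere periodicity. From the computation, $a=b$ would force $c_d$ to be globally constant, which happens exactly when $g$ coincides with $g'=\gcd\{\lambda_d-\lambda_{d'}:\text{all }d,d'\}$; hence properness is the purely number-theoretic assertion that, for even $n$, including the cross-parity differences strictly refines the gcd (equivalently, some even-indexed and some odd-indexed Ramanujan-sum eigenvalues are incongruent modulo $g$). I would settle this by exhibiting a concrete incongruent pair, for instance comparing $\lambda_0=\phi(n)$ with an odd-index eigenvalue such as $\lambda_1=r(1,n)=\mu(n)$, and tracking their residues through $\lambda_d=\mu(t_d)\phi(n)/\phi(t_d)$ with $t_d=n/\gcd(d,n)$. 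Pushing this residue analysis through uniformly over all even $n$ is the heart of the matter; once it is done, the corollary follows by combining the converse with Theorem~\ref{thm-2}.
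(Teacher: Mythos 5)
Your reduction of antipodal QFR to the condition that $e^{i\lambda_d t}$ depends only on the parity of $d$ is correct, and your ``only if'' half is exactly what the paper has (note that the paper offers no proof of the converse at all: the corollary is stated bare after Theorem~\ref{thm-2}, supported only by the examples $n=2,4,6$). But the step you yourself flag as ``the heart of the matter''---properness, i.e.\ $a\neq b$---is not merely unfinished: it is false for general even $n$, so no residue analysis can close it. Take $n=8$. Here $t_d=8/\gcd(d,8)$ and $\mu(8)=\mu(4)=0$, so the spectrum is $\lambda_0=4$, $\lambda_4=-4$, and $\lambda_d=0$ for every other $d$. The even-indexed eigenvalues are $\{4,0,-4,0\}$ and the odd-indexed ones are all $0$; forcing $e^{i\lambda_d t}$ to be constant on the even indices requires $e^{4it}=1$, and then $a=e^{4it}=1=b$, hence $\beta=(a-b)/2=0$ at every admissible time. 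So $Z_8$ admits no proper QFR between $0$ and $4$ (and, by the circulant symmetry $\lambda_{n-d}=\lambda_d$, antipodal pairs are the only candidates), only periodicity. Equivalently, strong cospectrality already fails for $n=8$: the eigenvalue $0$ is attained at both even and odd indices $d$, so its idempotent satisfies $E_0e_0\neq\pm E_0e_4$, contradicting the necessary condition in the paper's Lemma~2. The same collapse occurs for $n=12$, where every cross-parity difference is divisible by the same-parity gcd $g_s=2$.

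In other words, your criterion---that the gcd of same-parity eigenvalue differences must strictly refine when cross-parity differences are included---is the \emph{correct} characterization your computation produces; it holds for $n=2,4,6,10,14$ (in particular for even squarefree $n$, where $\pm1\in\{\lambda_d\}$ lie in opposite parity classes) but fails for $n=8$ and $n=12$. So the corollary as printed cannot be proved, by your route or any other: its ``if'' direction needs an extra hypothesis (e.g.\ $n$ squarefree, consistent with the assumption of Theorem~\ref{thm-4}, with $n=4$ as a sporadic extra case). Your proposal's structure is sound and more rigorous than anything in the paper, but the target statement itself is the obstruction, not your missing lemma.
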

\begin{theorem} \label{thm-4}
    If a unitary Cayley graph $X =(Z_n,S)$ exhibits QFR between nodes $u$ and $v$ at $t$ time, where $n$ is a squarefree integer, and eigenvalue supports $\varsigma^+_{u,v} = \{ \lambda_1, \lambda_2 \ldots \} $ $\varsigma^-_{u,v} = \{ \widetilde{\lambda_1}, \widetilde{\lambda_2}, \ldots \}$ then $\lambda_1- \widetilde{\lambda_1} = 0 ~ mod~2. $ \\
    Here, $\varsigma^+_{u,v} = \{\lambda_d: E_de_u = E_de_v\}$ and $\varsigma^-_{u,v} = \{\lambda_d: E_de_u = -E_de_v\}$
\end{theorem}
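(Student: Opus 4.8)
The plan is to make the two eigenvalue supports completely explicit and then exploit a sign symmetry of the Ramanujan-sum eigenvalues that is special to even $n$; combined with integrality this reduces the parity claim to a triviality. First I would pin down $\varsigma^+_{u,v}$ and $\varsigma^-_{u,v}$. By the strong-cospectrality lemma proved above, $u$ and $v$ satisfy $E_d e_u = \pm E_d e_v$ for every $d$, and since $(E_d)_{wu} = \tfrac1n \omega_n^{(u-w)d}$ this is equivalent to $\omega_n^{(u-v)d} = \pm 1$ for all $d$. Taking $d=1$ forces $v-u = n/2$, i.e. $u,v$ are antipodal (consistent with the proof of Theorem~\ref{thm-2}), and then $\omega_n^{(u-v)d} = (-1)^d$. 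Hence $\varsigma^+_{u,v} = \{\lambda_d : d \text{ even}\}$ and $\varsigma^-_{u,v} = \{\lambda_d : d \text{ odd}\}$.

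Next I would establish the key symmetry $\lambda_{d+n/2} = -\lambda_d$. Since $n$ is even, every $j \in S$ (a unit modulo $n$) is odd, so $\omega_n^{(n/2)j} = (-1)^j = -1$. Therefore
\[
\lambda_{d+n/2} = \sum_{j \in S} \omega_n^{(d+n/2)j} = \sum_{j \in S}\omega_n^{dj}\,\omega_n^{(n/2)j} = -\sum_{j\in S}\omega_n^{dj} = -\lambda_d,
\]
so in fact $\varsigma^-_{u,v} = -\,\varsigma^+_{u,v}$ as sets. Note that this step needs only $n$ even, not squarefreeness.

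The squarefree hypothesis enters only at the final step: it guarantees that $n/2$ is odd (equivalently, that $4$ does not divide $n$), so the involution $d \mapsto d + n/2 \pmod n$ reverses the parity of $d$. Thus this map sends the index set of $\varsigma^+_{u,v}$ bijectively onto that of $\varsigma^-_{u,v}$, pairing each $\lambda_d \in \varsigma^+_{u,v}$ with $\lambda_{d+n/2} = -\lambda_d \in \varsigma^-_{u,v}$. Since the unitary Cayley graph is integral, every $\lambda_d$ is an integer, so for a matched pair $\lambda_1 = \lambda_d$ and $\widetilde{\lambda_1} = -\lambda_d$ we obtain $\lambda_1 - \widetilde{\lambda_1} = 2\lambda_d \equiv 0 \pmod 2$.

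The step I expect to be the real obstacle is the correct identification of the matching between the two supports: the conclusion holds for the negation-matched pairs produced by $d \mapsto d + n/2$, and not for arbitrary cross-pairs (for $n=6$ one has $\varsigma^+_{u,v} = \{2,-1\}$ and $\varsigma^-_{u,v} = \{1,-2\}$, where $2-1$ is odd), so the indexing of $\{\lambda_1,\lambda_2,\ldots\}$ and $\{\widetilde{\lambda_1},\widetilde{\lambda_2},\ldots\}$ must be taken to be this shift-induced pairing. Making precise that it is squarefreeness, rather than merely $n$ even, that renders the shift parity-reversing and hence support-swapping is the crux; once that matching and the symmetry $\lambda_{d+n/2} = -\lambda_d$ are in place, integrality finishes the argument.
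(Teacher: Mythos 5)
Your proposal is correct, but it takes a genuinely different route from the paper's. The paper's own proof is a two-line citation argument: by Corollary 14 of Klotz and Sander \cite{UCG}, both $1$ and $-1$ are eigenvalues of the unitary Cayley graph (each with multiplicity $\phi(n)$) exactly when $n$ is even and squarefree; the paper then asserts that $1$ and $-1$ must lie in opposite eigenvalue supports ``due to opposite signs'' and reads off $\lambda_1 - \widetilde{\lambda_1} = \pm 2 \equiv 0 \pmod 2$. Your argument is self-contained and structurally stronger: you derive antipodality of $u,v$ from strong cospectrality, identify $\varsigma^{+}_{u,v}$ and $\varsigma^{-}_{u,v}$ with the even- and odd-index eigenvalue classes, prove the sign symmetry $\lambda_{d+n/2} = -\lambda_d$ (valid for every even $n$), and isolate squarefreeness as precisely the hypothesis making $n/2$ odd, so that the shift $d \mapsto d + n/2$ is parity-reversing and hence support-swapping. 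This gives $\varsigma^{-}_{u,v} = -\varsigma^{+}_{u,v}$ as sets, of which the paper's pair $\{1,-1\}$ is a single instance; in particular, the assertion the paper leaves unjustified (that $\pm 1$ fall in different supports) is a corollary of your symmetry. Your $n=6$ observation ($2 \in \varsigma^{+}_{u,v}$, $1 \in \varsigma^{-}_{u,v}$, difference odd) is also a genuine contribution rather than a quibble: it shows the theorem as literally stated, with unspecified indexing of the two supports, fails for arbitrary cross-pairs, so the conclusion must be read---as both you and, implicitly, the paper do---relative to a specific matching of eigenvalues, namely negation. What the paper's approach buys is brevity via an external number-theoretic fact; what yours buys is a citation-free spectral proof, an explicit account of where squarefreeness enters, and the stronger conclusion that every eigenvalue in one support has its negative in the other.
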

\begin{proof}
    By the corollary 14 of \cite{UCG}, $\{ 1, -1\}$ both are eigenvalues of the unitary Cayley
graph with  $\phi(n)$ multiplicity, if and only if $n$ is squarefree
and even number. Now, since QFR exists only if $n$ is an even number in unitary cayley graph and $1$ and $-1$ will be in the different eigenvalue support due to opposite signs, hence the results.
\end{proof}

\begin{example}
 Quantum fractional revival occurs in unitary Cayley graph for $n= 2,4$ and $6$ between the antipodal vertices at the time $t= \pi/2, \pi/2$ and $2\pi/3$ respectively.
\end{example}

\section{Conclusion} We studied quantum fractional revival in unitary Cayley graph, which is an important phenomenon of quantum state transfer of continuous time quantum walks. Unitary Cayley graph is a specific case where set of edge-generators $|S|$  is collection of units with respect to $n.$ Quantum fractional revival which is quantum analogue of quantum entanglement, is explored. We proved that quantum fractional revival exists in unitary Cayley graphs only if the vertices count $n$ is even. We also computed the form of unitary evolution matrix of the quantum walk in unitary Cayley graph and spectral characterization if QFR exists. 
\section*{Acknowledgment}

Authors are grateful to Bennett University for providing financial support.

\footnotesize

\end{document}